\documentclass[10pt,conference,a4paper,twocolumn]{IEEEtran}
\usepackage{graphicx,subfigure}
\usepackage{amsmath,amssymb,amsthm}
\usepackage{enumerate}
\usepackage{algorithm,algorithmic}
\usepackage{mdwlist}
\usepackage{cite} 
\usepackage{flushend,cuted}

\setlength{\textfloatsep}{10pt plus 1.0pt minus 2.0pt}

\newcommand{\figref}[1]{Fig.\,\ref{#1}}

\newtheorem{Theorem}{\textbf{Theorem}}
\newtheorem{Corollary}{\textbf{Corollary}}
\newtheorem{Definition}{\textbf{Definition}}

{\proof}{\proofend}

\def\a{\alpha}
\def\mA{\boldsymbol A}
\def\b{\beta}
\def\c{\mathbf c}

\def\Dmin{D_{\min}}

\def\F{\mathbb F}

\def\P{\mathcal{P}}
\def\p{\mathbf{p}}
\def\R{\mathcal{R}}
\def\r{r}

\def\Umin{U_{\min}}
\def\W{\mathcal W}

\linespread{1}
\author{
	\IEEEauthorblockN{Mingchao Yu and Parastoo Sadeghi}
	\IEEEauthorblockA{Research School of Engineering, The Australian National University, Canberra, Australia\\
		Emails: \{ming.yu,~parastoo.sadeghi\}@anu.edu.au}}
\title{Approximating Throughput and Packet Decoding Delay in Linear Network Coded Wireless Broadcast}
\begin{document}
\maketitle
\begin{abstract}
In this paper, we study a wireless packet broadcast system that uses linear network coding (LNC) to help receivers recover data packets that are missing due to packet erasures. We study two intertwined performance metrics, namely throughput and average packet decoding delay (APDD) and establish strong/weak approximation relations based on whether the approximation holds for the performance of every receiver (strong) or for the average performance across all receivers (weak). We prove an equivalence between strong throughput approximation and strong APDD approximation. We prove that throughput-optimal LNC techniques can strongly approximate APDD, and partition-based LNC techniques may weakly approximate throughput. We also prove that memoryless LNC techniques, including instantly decodable network coding techniques, are not strong throughput and APDD approximation nor weak throughput approximation techniques.
%
%
%
%
\end{abstract}
\begin{keywords}
	Wireless broadcast, network coding, throughput, decoding delay, approximation.
\end{keywords}
\section{Introduction}
In this paper, we consider a wireless broadcast problem where a sender wishes to broadcast a block $\P$ of $K$ data packets to a set of $N$ receivers using linear network coding (LNC) \cite{li2003linear,koetter2003algebraic}. Each receiver is assumed to already possess a subset of $\P$ and still wants all the remaining data packets.

For such systems, two important performance metrics are throughput and average packet decoding delay (APDD). While throughput measures how fast the broadcast can be completed, APDD measures how fast each individual data packet can be decoded by each receiver. A lower APDD implies faster data delivery to the application layer on average, and is particularly important when individual data packets are informative.

Throughput can be maximized if every LNC coded packet is innovative to every receiver who has not fully recovered $\P$. Such packets can be generated either randomly (i.e., the classic random LNC (RLNC) technique \cite{ho:medard:koetter:karger:effros:2006}) or deterministically (e.g., by solving a hitting set problem \cite{kwan2011generation}, or by adding extra data packets to instantly decodable coded packets \cite{keller2008online}). However, the APDD performance of these techniques have not been well studied. Recently, Yu \emph{et al} proved that RLNC approximates the minimum APDD average over all receivers with a ratio of 2 \cite{yu:sprintson:sadeghi:netcod2015}, i.e., its APDD is at most two times of the minimum.

\begin{figure}[t]
	\centering
	\includegraphics[width=0.8\linewidth]{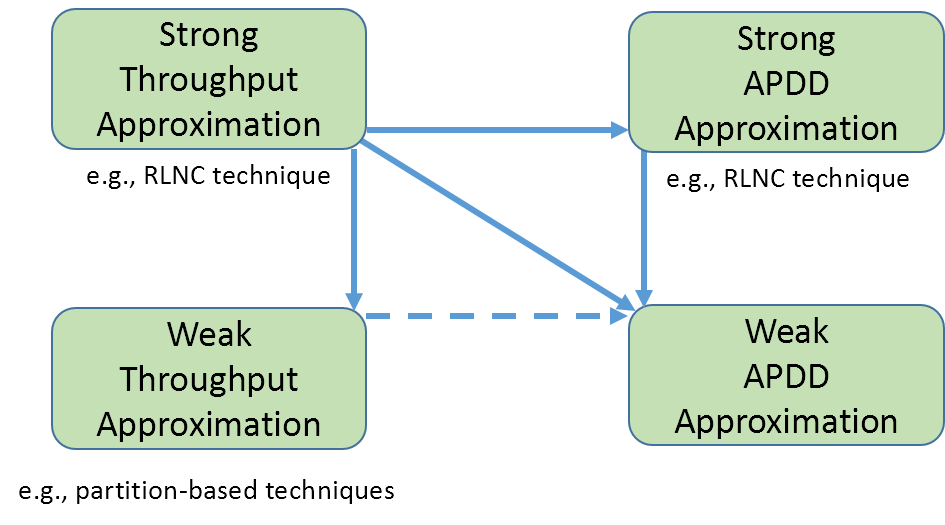}
	\caption{Interplay between throughput and APDD approximation.}
	\label{fig:interplay}
\end{figure}

APDD can be minimized if every LNC coded packet allows every receiver to instantly decode a wanted data packet \cite{yu:sprintson:sadeghi:netcod2015}. Such coded packets, however, are NP-hard to find \cite{yu:sprintson:sadeghi:netcod2015}. Instead, instantly decodable network coding (IDNC) techniques generate in each transmission a coded packet that allows a subset of receivers to instantly decode a wanted packet, and ask the remaining receivers to discard this coded packet rather than storing it in the memory for future decoding. Thus, IDNC techniques are memoryless. Due to this feature, IDNC techniques are generally not throughput optimal \cite{li:idnc_video:2011}.  It has also been proved in \cite{sorour2015completion} that it is intractable to maximize the throughput of general IDNC techniques. Although a large body of heuristics have been developed as a remedy, it is an open problem whether memoryless LNC techniques are able to approximate the optimal throughput and APDD.

A tradeoff between throughput and APDD can be achieved by partitioning $\P$  into disjoint sub-blocks and broadcasting them separately using certain LNC techniques \cite{maymounkov:generation:2006,emina:li:isit2012,yu2013rapprochement,yu:parastoo:neda:2014}. Although it is understood that such partitioned-based LNC techniques are generally not throughput and APDD optimal, their approximation performance has not been studied.

Due to the fact that throughput and APDD optimization could be intractable and that heuristics cannot guarantee bounded performance, throughput and APDD approximation is important in the design and evaluation of LNC techniques. However, to the best of our knowledge, LNC throughput and APDD approximation has not received much attention in the literature. Moreover, the interplay between throughput and APDD optimization has not been well studied. We also note that optimization is a special approximation with a ratio of 1.

Therefore, in this paper, we study the more general problem of throughput and APDD approximation in linear network coded wireless broadcast. Specifically, we will
\begin{enumerate}
\item introduce the concepts of strong/weak throughput and APDD approximation. Here strong (resp. weak) means that the approximation holds for the performance of every receiver (resp. averaged over all receivers);
\item investigate the interplay between throughput and APDD approximation;
\item evaluate the approximation performance of the aforementioned three classes of LNC techniques.
\end{enumerate}
Some main findings of this paper are (also depicted in \figref{fig:interplay}):
\begin{itemize}
\item A strong throughput $\beta$-approximation technique also strongly approximates APDD with a ratio of at most $2\beta$. This relation does not necessarily hold between weak throughput and APDD approximation;
\item All strong throughput-optimal LNC techniques strongly approximate APDD with a ratio between $\frac{4}{3}$ and $2$;
\item A technique that partitions the packet block into $M$ disjoint sub-blocks and applies a weak throughput $\beta$-approximation technique to each sub-block weakly approximates throughput with a ratio of at most $\beta M$;
\item Memoryless LNC techniques are not strong throughput and APDD approximation nor weak throughput approximation techniques.
\end{itemize}

\section{System Model and Performance Measurements}
We consider a block-based wireless broadcast scenario, in which the sender wishes to deliver a block of $K$ data packets, denoted by $\P=\{\p_k\}_{k=1}^K$, to a set of $N$ receivers, denoted by $\R=\{\r_n\}_{n=1}^N$. All data packets are vectors of the same length, with entries taken from a finite field $\F_q$. Time is slotted, and in each time slot the sender broadcasts a coded packet to all receivers. The wireless channel between the sender and each receiver $\r_n$ is independent of each other, and is subject to Bernoulli random packet erasures with a probability of $P_{e,n}$.

We assume each receiver has already received a subset of packets in $\P$ and still wants all the rest. Such a packet reception state could be the consequence of previous uncoded transmissions \cite{heide_systematic_RLNC}, and is a common assumption in network coding and index coding literature \cite{birk2006coding}.
This state can be summarized by a binary $N\times K$ state feedback matrix (SFM) $\mA$: $\mA(n,k)=1$ means $\r_n$ has missed $\p_k$ and wants it, and $\mA(n,k)=0$ means $\r_n$ already has $\p_k$.
The set of data packets wanted by $r_n$ is
denoted by $\W_n$. Its size is denoted by $w_n$.

The sender then applies an LNC technique to help receivers recover their missing data packets. In each time slot, it broadcasts an LNC packet $\c$, which takes the form of:
\begin{equation}
	\c=\sum_{\p_k\in\P}\a_k\p_k,
\end{equation}
where $\{\a_k\}$ are coding coefficients chosen from $\F_q$. In particular, RLNC technique chooses $\{\a_k\}$ uniformly at random.


\def\mysum{\mathsf{sum}}
\subsection{Performance Metrics}
Our first performance metric is throughput. It measures how fast the broadcast of $\P$ can be finished. Noting that a minimum of $K$ unocded transmissions is initially needed, we measure throughput by the total number of coded transmissions in the broadcast, which is denoted by $U$. Clearly, a smaller $U$ indicates higher throughput. We further denote by $U_n$ the number of coded transmissions after which receiver $\r_n$ decodes all its wanted data packets. Consequently, $U=\max(\{U_n\}_{n=1}^N)$.

Our second performance metric is average packet decoding delay (APDD). It reflects how fast each individual data packet is decoded by each receiver on average. Given a realization of $\mA$, the APDD of receiver $\r_n$, denoted by $D_n$, is:
\begin{equation}\label{eq:d_def:rec}
	D_n=\frac{1}{w_n}\sum_{\forall k: \p_k\in\W_n}u_{n,k},
\end{equation}
where $u_{n,k}$ is the index of the coded transmission after which $\r_n$ decodes $\p_k$. The APDD across all receivers is similar:
\begin{equation}\label{eq:d_def}
	D=\frac{1}{\mysum(\mA)}\sum_{\forall k,n: \mA(n,k)=1}u_{n,k},
	\vspace{-0.5em}
\end{equation}
where $\mysum(\mA)$ is the sum of the entries of $\mA$, and is equal to the number of ones in $\mA$.

\subsection{Performance Limits and Expectations}
We denote by $U_{\min,n}$ (resp. $\Umin$) the minimum possible $U_n$ (resp. $U$) that any LNC techniques can offer without packet erasures. It is clear that $U_{\min,n}=w_n$ and $\Umin=\max(\{w_n\}_{n=1}^N)$. We further denote by $\overline U_{\min,n}$  (resp. $\overline U_{\min}$) the minimum expected $U_n$ (resp. $U$) that any LNC techniques can offer with random packet erasures. It is clear that $\overline U_{\min,n}\geqslant U_{\min,n}$ and $\overline U_{\min}\geqslant \Umin$, and the equalities hold when there are no packet erasures. $\overline U_{\min}$ has been studied in the literature through studying RLNC \cite{nistor2011delay}.

Similarly, we denote by $D_{\min,n}$ (resp. $D_{\min}$) the minimum $D_n$ (resp. $D$) that any LNC techniques can offer without packet erasures. It holds that $D_{\min,n}\geqslant \frac{w_n+1}{2}$ and $D_{\min}\geqslant \frac{\sum w_nD_{\min,n}}{\sum w_n}$, where the equality holds when every coded packet allows every receiver to instantly decode a wanted data packet. We further denote by $\overline D_{\min,n}$  (resp. $\overline D_{\min}$) the minimum expected $D_n$ (resp. $D$) that any LNC techniques can offer with random packet erasures. Again, $\overline D_{\min,n}\geqslant D_{\min,n}$ and $\overline D_{\min}\geqslant \Dmin$. It is proved in \cite{yu2015minimizing} that $\overline D_{\min}$ is NP-hard to find.

If an LNC technique called ``X'' is applied, we add $(X)$ to the end of the above. For example, $\overline D_{\min,n}(\mathrm{RLNC})$ denotes the minimum expected APDD $D_n$ of $\r_n$ under RLNC.

\section{Defining Performance Approximation}
In this section, we define strong and weak approximation of throughput and APDD.

\subsection{Strong Approximation}
We define strong throughput approximation as follows:

\begin{Definition}
An LNC technique X is a strong throughput $\b$-approximation technique if and only if:
\begin{equation}
\frac{\overline U_{\min,n}(\mathrm{X})}{\overline U_{\min,n}}\leqslant \beta
\end{equation}
for every receiver $\r_n$ under any SFM and any packet erasure probabilities $\{P_{e,n}\}_{n=1}^N$, where $\beta\geqslant 1$ is a constant. In particular, if $\beta=1$, then technique-X is a strong throughput-optimal technique. 
\end{Definition}

According to this definition, when a strong throughput $\b$-approximation LNC technique is applied, every receiver $\r_n$ can expect to decode all its wanted data packets within $\beta \overline U_{\min,n}$ coded transmissions regardless of the packet reception state and packet erasure probability of the other receivers.

We define strong APDD approximation similarly:

\begin{Definition}
An LNC technique X is a strong APDD $\b$-approximation technique if and only if:
\begin{equation}
		\frac{\overline D_{\min,n}(\mathrm{X})}{\overline D_{\min,n}}\leqslant \beta
\end{equation}
for every receiver $\r_n$ under any SFM and any packet erasure probabilities $\{P_{e,n}\}_{n=1}^N$, where $\beta\geqslant 1$ is a constant. In particular, if $\beta=1$, then technique-X is a strong APDD-optimal technique. 
\end{Definition}

\subsection{Weak Approximation}
We define weak throughput approximation as follows:

\begin{Definition}
	An LNC technique X is a weak throughput $\b$-approximation technique if and only if:
	\begin{equation}
		\frac{\overline U_{\min}(\mathrm{X})}{\overline U_{\min}}\leqslant \beta
	\end{equation}
	for any SFM and any packet erasure probabilities $\{P_{e,n}\}_{n=1}^N$, where $\beta\geqslant 1$ is a constant. In particular, if $\beta=1$, then technique-X is a weak throughput -ptimal technique. 
\end{Definition}

According to this definition, with a weak throughput $\b$-approximation LNC technique, we can expect to complete the coded broadcast within $\beta \overline U_{\min}$ coded transmissions.

Similarly, we define weak APDD approximation as follows:

\begin{Definition}
An LNC technique X is a weak APDD $\b$-approximation technique if and only if:
\begin{equation}
		\frac{\overline D_{\min}(\mathrm{X})}{\overline D_{\min}}\leqslant \beta
\end{equation}
for any SFM and any packet erasure probabilities $\{P_{e,n}\}_{n=1}^N$, where $\beta\geqslant 1$ is a constant. In particular, if $\beta=1$, then technique-X is a weak APDD-optimal technique. 
\end{Definition}

It is clear that a strong throughput/APDD approximation technique is also a weak one, but not necessarily vice versa. Our main interest in this paper is the interplay between throughput  and APDD approximation. To this end, we first establish the performance of a reference technique, namely RLNC, that strongly approximates both throughput and APDD.

\section{The Performance of RLNC}
In this section, we study the approximation performance of RLNC, and then extend the result to the general class of throughput-optimal LNC techniques.

\begin{Theorem}
RLNC is a strong throughput-optimal and strong APDD 2-approximation technique. Mathematically, for every receiver $\r_n$, it always holds that $\overline U_{\min,n}(\mathrm{RLNC})=\overline U_{\min,n}$ and $\overline D_{\min,n}(\mathrm{RLNC})\leqslant2\overline D_{\min,n}$.
\end{Theorem}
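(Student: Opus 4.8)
The plan is to handle the two assertions separately --- each reducing to properties of negative-binomial waiting times under Bernoulli erasures --- and then combine them. For the throughput claim, fix a receiver $\r_n$ with $w_n=|\W_n|$ wanted packets and erasure probability $P_{e,n}$. The lower bound $\overline U_{\min,n}\geqslant \frac{w_n}{1-P_{e,n}}$ rests on a dimension-counting observation that I will reuse: once $\r_n$ has received coded packets whose coefficient vectors, restricted to the coordinates of $\W_n$, span a subspace $V\subseteq\F_q^{w_n}$ of dimension $d$, the wanted packet $\p_k$ is decodable only if $\e_k\in V$, so at most $d$ wanted packets are recoverable; all $w_n$ of them are recovered only once $\dim V=w_n$, i.e.\ only after $w_n$ innovative receptions. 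Since innovative receptions form a subsequence of the successful receptions, and the successful receptions are i.i.d.\ Bernoulli$(1-P_{e,n})$ events independent of the coding technique, $U_n$ is at least the waiting time for $w_n$ successes, whose mean is $\frac{w_n}{1-P_{e,n}}$; this holds for every technique. For achievability I would use the standard fact that, for a sufficiently large field $\F_q$, a fresh random coded packet is innovative to any still-unsatisfied receiver with probability at least $1-q^{-1}$ (it fails only when the random coefficient vector lands in a fixed proper subspace), so under RLNC every successful reception at $\r_n$ is innovative in the large-field limit; hence $U_n(\mathrm{RLNC})$ coincides with the waiting time for $w_n$ successes and $\overline U_{\min,n}(\mathrm{RLNC})=\frac{w_n}{1-P_{e,n}}=\overline U_{\min,n}$.

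For the APDD claim I would push the same dimension argument one notch further. Order the decoding times of $\r_n$'s wanted packets as $u_{n,(1)}\leqslant\dots\leqslant u_{n,(w_n)}$; by the observation above the $j$-th of these cannot precede $\r_n$'s $j$-th innovative reception, hence not its $j$-th successful reception, so $\mathbb E[u_{n,(j)}]\geqslant \frac{j}{1-P_{e,n}}$ for every $j$ and every technique. Averaging over $j$ gives $\overline D_{\min,n}\geqslant \frac{1}{w_n(1-P_{e,n})}\sum_{j=1}^{w_n}j=\frac{w_n+1}{2(1-P_{e,n})}$. On the RLNC side no refined analysis is needed: $u_{n,k}\leqslant U_n$ for every $k$ gives $D_n\leqslant U_n$ pointwise, hence $\overline D_{\min,n}(\mathrm{RLNC})\leqslant\overline U_{\min,n}(\mathrm{RLNC})=\frac{w_n}{1-P_{e,n}}$ by the first part. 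Dividing,
\begin{equation}
\frac{\overline D_{\min,n}(\mathrm{RLNC})}{\overline D_{\min,n}}\leqslant \frac{w_n/(1-P_{e,n})}{(w_n+1)/\big(2(1-P_{e,n})\big)}=\frac{2w_n}{w_n+1}<2,
\end{equation}
uniformly over all receivers, SFMs and erasure vectors, which is precisely the strong APDD $2$-approximation statement.

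The step I expect to be most delicate is the RLNC innovation guarantee: the exact equality $\overline U_{\min,n}(\mathrm{RLNC})=\overline U_{\min,n}$ holds only in the large-field regime, so I would either state the theorem for $q\to\infty$ (equivalently, for sufficiently large $q$) or carry the $O(q^{-1})$ correction to the waiting-time mean and note that it vanishes. The other ingredients are elementary: the dimension lemma converting ``decoded $j$ packets'' into ``$\geqslant j$ innovative receptions'', and $\mathbb E[\text{waiting time for }m\text{ Bernoulli}(p)\text{ successes}]=m/p$. It is also worth recording why the constant $2$ cannot be lowered: in the large-field limit RLNC is all-or-nothing, because a random $(w_n-1)$-dimensional subspace of $\F_q^{w_n}$ almost never contains a unit vector, so $\r_n$ decodes nothing until its $w_n$-th innovative reception and then all $w_n$ packets simultaneously, forcing $D_n\approx U_n$ and driving the ratio up to $\frac{2w_n}{w_n+1}\to 2$ as $w_n\to\infty$.
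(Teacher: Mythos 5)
Your proof is correct and follows essentially the same route as the paper's: establish $\overline U_{\min,n}(\mathrm{RLNC})=\overline U_{\min,n}=\frac{w_n}{1-P_{e,n}}$, note $\overline D_{\min,n}(\mathrm{RLNC})=\overline U_{\min,n}(\mathrm{RLNC})$, invoke the lower bound $\overline D_{\min,n}\geqslant\frac{w_n+1}{2(1-P_{e,n})}$, and divide to get the ratio $\frac{2w_n}{w_n+1}\leqslant 2$. The only difference is that you supply first-principles arguments (the dimension-counting lemma, the ordered decoding times, and the large-field innovation guarantee) for the two ingredients the paper simply asserts or cites from \cite{yu2015minimizing}, and you correctly flag the field-size caveat that the paper leaves implicit.
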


\begin{proof}
It is clear that
\begin{equation}
\overline U_{\min,n}(\mathrm{RLNC})=\overline U_{\min,n}=\frac{w_n}{1-P_{e,n}},
\end{equation}
Then, since $\r_n$ decodes all its wanted data packets on average after $\overline U_{\min,n}(\mathrm{RLNC})$ coded transmissions, we have
\begin{equation}
\overline D_{\min,n}(\mathrm{RLNC})=\overline U_{\min,n}(\mathrm{RLNC})=\frac{w_n}{1-P_{e,n}}.
\end{equation}
On the other hand, the authors of \cite{yu2015minimizing} has proved that
\begin{equation}
\overline D_{\min,n}\geqslant \frac{w_n+1}{2(1-P_{e,n})},
\end{equation}
Therefore,
\vspace{-0.5em}
\begin{equation}
\frac{\overline D_{\min,n}(\mathrm{RLNC})}{\overline D_{\min,n}} \leqslant\frac{2w_n}{w_n+1}\leqslant 2,
\end{equation}
which completes the proof.
\end{proof}
We note that in terms of APDD performance, RLNC is in fact, the worst technique in the class of strong throughput-optimal LNC techniques, as it generally does not provide early packet decodings (excluding occasional early decodings). Thus, we can state that all LNC techniques in this class strongly approximate APDD with a ratio of at most 2. This result can be further strengthened into the following:
\begin{Theorem}
All strong throughput-optimal LNC techniques strongly approximate APDD with a ratio between $\frac{4}{3}$ and 2.
\end{Theorem}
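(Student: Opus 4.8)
We must bound the best APDD approximation ratio achievable inside this class both from above (by $2$) and from below (by $\tfrac43$).

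\textbf{Upper bound.} The plan is to carry the Theorem~1 argument through for an \emph{arbitrary} strong throughput-optimal technique X. Fix any SFM, any $\{P_{e,n}\}_{n=1}^N$ and any receiver $\r_n$. Strong throughput-optimality gives $\overline U_{\min,n}(\mathrm X)=\overline U_{\min,n}=\tfrac{w_n}{1-P_{e,n}}$, and in a mode achieving it $\r_n$ has decoded every packet of $\W_n$ by transmission $U_n$, so $u_{n,k}\le U_n$ for all $\p_k\in\W_n$, hence $D_n\le U_n$. Taking expectations and then the minimum over the technique's behaviours yields $\overline D_{\min,n}(\mathrm X)\le\overline U_{\min,n}(\mathrm X)=\tfrac{w_n}{1-P_{e,n}}$. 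Dividing by the lower bound $\overline D_{\min,n}\ge\tfrac{w_n+1}{2(1-P_{e,n})}$ of \cite{yu2015minimizing} gives $\overline D_{\min,n}(\mathrm X)/\overline D_{\min,n}\le\tfrac{2w_n}{w_n+1}<2$, uniformly in $n$, the SFM and the erasures; the supremum $2$ is approached as $w_n\to\infty$, already by RLNC (cf.\ Theorem~1).

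\textbf{Lower bound.} The plan is to exhibit a single instance on which \emph{every} strong throughput-optimal technique is forced to incur APDD at least $\tfrac43$ times the single-receiver optimum. Take $K=2$, all $P_{e,n}=0$, and three receivers: $\r_1$ has nothing and wants $\{\p_1,\p_2\}$; $\r_2$ has $\{\p_1\}$ and wants $\{\p_2\}$; $\r_3$ has $\{\p_2\}$ and wants $\{\p_1\}$. Then $\Umin=\max_n w_n=2$, and a throughput-optimal code exists (send $\p_1+\p_2$, then $\p_1$), so the class is non-vacuous here. Since $K=2$ every coded packet equals $\a_1\p_1+\a_2\p_2$. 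Now note that any throughput-optimal code must let $\r_2$ decode $\p_2$ after one transmission, because $w_2=1$; as $\r_2$ holds $\p_1$, the first coded packet $\c_1=\a_1\p_1+\a_2\p_2$ must have $\a_2\neq0$, and symmetrically (via $\r_3$) $\a_1\neq0$. A combination with both coefficients nonzero reveals no individual packet to a receiver with empty side information, so $\r_1$ decodes neither $\p_1$ nor $\p_2$ at slot $1$; being throughput-optimal, $\r_1$ decodes both at slot $2$, so $D_1=2$. On the other hand $D_{\min,1}=\tfrac{w_1+1}{2}=\tfrac32$, attained by the (non-throughput-optimal) code that sends $\p_1$ and then $\p_2$. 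Hence $\overline D_{\min,1}(\mathrm X)/\overline D_{\min,1}\ge\tfrac43$ on this SFM with zero erasures, so the APDD ratio of every strong throughput-optimal technique is at least $\tfrac43$. Combining the two bounds proves the claim.

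\textbf{Main obstacle.} The delicate part is the forcing step in the lower bound: one must be certain that no throughput-optimal code can shift the ``decode-late'' behaviour off $\r_1$. On more symmetric instances (e.g.\ the three-receiver cyclic index-coding configuration) the burden of late decoding can be rotated among receivers by choosing different codes, so no fixed receiver exhibits ratio $\tfrac43$; the instance above is engineered precisely so that $\r_1$ is the unique receiver with $w_n=\Umin$ and empty side information, while the two unit-demand receivers $\r_2,\r_3$ jointly pin $\c_1$ to the fully mixed form. Checking carefully that this genuinely forbids \emph{every} early decoding for $\r_1$, and that such a $\c_1$ still extends to a complete throughput-optimal code, is the crux; once that is settled the remainder is the elementary computation $2/(3/2)=4/3$.
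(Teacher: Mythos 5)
Your proof is correct and follows essentially the same route as the paper: the upper bound is obtained by bounding $\overline D_{\min,n}(\mathrm X)\leqslant \overline U_{\min,n}(\mathrm X)=\overline U_{\min,n}$ and dividing by $\overline D_{\min,n}\geqslant \frac{w_n+1}{2(1-P_{e,n})}$ (as in Theorem~1 and Theorem~3), and the lower bound uses the very same two-packet, three-receiver instance (the paper labels the both-wanting receiver $\r_3$ rather than $\r_1$), where the two unit-demand receivers force a fully mixed first packet and the remaining receiver gets $D=2$ against $D_{\min}=1.5$. Your extra care in justifying that both coefficients of $\c_1$ must be nonzero only makes explicit what the paper leaves implicit.
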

\begin{proof}
Since the approximation ratio of an LNC technique is the largest ratio it provides across any SFMs, to prove that the ratio is at least $\frac{4}{3}$ for strong throughput-optimal techniques we only need an instance of SFM where $\frac{4}{3}$ is achieved by them.

Our SFM consists of $2$ data packets and 3 receivers. $\r_1$ only wants $\p_1$, $\r_2$ only wants $\p_2$, and $\r_3$ wants both packets. For this SFM, any strong throughput-optimal LNC technique-X will send as $\c_1$ a linear combination of $\p_1$ and $\p_2$ to satisfy both $\r_1$ and $\r_2$. However, $\c_1$ does not allow $\r_3$ to decode. $\r_3$ can only decode after the second coded transmission. Thus, $D_{\min,3}(X)=2$. On the other hand, by sending $\p_1$ and $\p_2$ separately, $D_{\min,3}=1.5$. Thus, $\frac{U_{\min,3}(X)}{U_{\min,3}}=\frac{4}{3}$.
\end{proof}

\section{Interplay Between Throughput and APDD Approximation}
With the help of RLNC, we establish the following relation between throughput and APDD approximation:
\begin{Theorem}
	Strong throughput $\b$-approximation techniques strongly approximate APDD with a ratio of at most $2\b$.
\end{Theorem}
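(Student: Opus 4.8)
The plan is to sandwich $\overline D_{\min,n}(\mathrm{X})$ between the throughput guarantee of X from above and the universal per-receiver APDD lower bound from below, so that the factor $2\b$ in the conclusion comes out as exactly the product of the $\b$ supplied by the hypothesis and the $2$ already present in the RLNC argument.

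First I would record an elementary pointwise fact that holds for \emph{any} LNC technique and \emph{any} realization of the erasures: a receiver's APDD never exceeds its own completion time. Indeed, $\r_n$ has recovered every wanted packet by the $U_n$-th coded transmission, so $u_{n,k}\leqslant U_n$ for every $\p_k\in\W_n$; averaging these inequalities over the $w_n$ wanted packets in \eqref{eq:d_def:rec} gives $D_n\leqslant U_n$. Passing to expectations over the channel, and using the implementation of X that attains $\overline U_{\min,n}(\mathrm{X})$ as a feasible witness for the APDD infimum, this yields $\overline D_{\min,n}(\mathrm{X})\leqslant \overline U_{\min,n}(\mathrm{X})$.

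Then I would combine three ingredients already in hand: the hypothesis, giving $\overline U_{\min,n}(\mathrm{X})\leqslant \b\,\overline U_{\min,n}$; the identity $\overline U_{\min,n}=w_n/(1-P_{e,n})$; and the lower bound $\overline D_{\min,n}\geqslant (w_n+1)/\bigl(2(1-P_{e,n})\bigr)$ from \cite{yu2015minimizing}. Chaining them gives
\[
\frac{\overline D_{\min,n}(\mathrm{X})}{\overline D_{\min,n}}\;\leqslant\;\frac{\b\, w_n/(1-P_{e,n})}{(w_n+1)/\bigl(2(1-P_{e,n})\bigr)}\;=\;\frac{2\b w_n}{w_n+1}\;\leqslant\;2\b ,
\]
and since the chain holds for every receiver, every SFM and every erasure vector, X is by definition a strong APDD $2\b$-approximation technique.

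The only point needing care is the first step: because $\overline D_{\min,n}(\mathrm{X})$ is an infimum over the admissible implementations of X, one cannot apply the pointwise bound $D_n\leqslant U_n$ ``inside'' a single jointly optimal implementation; instead one upper-bounds $\overline D_{\min,n}(\mathrm{X})$ by the expected APDD of the implementation attaining $\overline U_{\min,n}(\mathrm{X})$, and then bounds that expected APDD by the matching expected completion time $\overline U_{\min,n}(\mathrm{X})$. Everything after that is the same short computation as in the RLNC argument above, with the constant $1$ there replaced by a general $\b$.
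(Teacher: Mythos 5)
Your proof is correct and follows essentially the same route as the paper's: both hinge on the pointwise bound $\overline D_{\min,n}(\mathrm{X})\leqslant \overline U_{\min,n}(\mathrm{X})$, the hypothesis $\overline U_{\min,n}(\mathrm{X})\leqslant\b\,\overline U_{\min,n}$, and the lower bound $\overline D_{\min,n}\geqslant (w_n+1)/\bigl(2(1-P_{e,n})\bigr)$; the paper merely packages the last two steps as ``$\overline U_{\min,n}=\overline D_{\min,n}(\mathrm{RLNC})\leqslant 2\overline D_{\min,n}$'' via the RLNC reference technique, which you have simply inlined. Your extra care in justifying $\overline D_{\min,n}(\mathrm{X})\leqslant \overline U_{\min,n}(\mathrm{X})$ (which the paper asserts without comment) is a welcome refinement, not a departure.
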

\begin{proof}
	Consider a strong throughput $\b$-approximation LNC technique called X. By definition,
	\begin{equation}
		\overline U_{\min,n}(\mathrm{X})\leqslant \b \overline U_{\min,n},
	\end{equation}
	for any receiver $\r_n$ in any given SFM. Then, since
	\begin{equation}\nonumber
	\begin{cases}
	\overline D_{\min,n}(\mathrm{X})\leqslant \overline U_{\min,n}(\mathrm{X}), &~\\
	\overline U_{\min,n} = \overline U_{\min,n}(\mathrm{RLNC}) = \overline D_{\min,n}(\mathrm{RLNC}),~\mathrm{and}\\
	\overline D_{\min,n}(\mathrm{RLNC}) \leqslant \overline 2D_{\min,n},
	\end{cases}
	\end{equation}
	we obtain $\overline D_{\min,n}(\mathrm{X})\leqslant 2\beta D_{\min,n}$.
\end{proof}

On the other hand, weak throughput approximation techniques do not necessarily weakly approximate APDD. To see this, we will prove in the next section that partition-based LNC techniques may weakly approximate throughput but may not weakly approximate APDD. We summarize the interplay between throughput and APDD approximation in \figref{fig:interplay}.

\section{Partition-based LNC Techniques}
Given an SFM, partition-based LNC techniques partition the packet block $\P$ into $M$ ($M>1$) disjoint or overlapped sub-blocks $\{\P_m\}_{m=1}^M$ \cite{maymounkov:generation:2006,silva:generation:2009,emina:li:isit2012,yu2013rapprochement,yu:parastoo:neda:2014}. In this paper, we only consider the disjoint case. Correspondingly, the SFM is partitioned into $M$ sub-SFMs $\{\mA_m\}_{m=1}^M$, where in $\mA_m$, the $N$ receivers want data packets from  $\P_m$. A certain LNC technique (e.g. RLNC) is then applied to each sub-block separately in order.

%
%

\begin{Theorem}
If a partition-based LNC technique applies a weak throughput $\beta$-approximation LNC technique to each of the $M$ sub-SFMs, then it is at most a weak throughput $2\beta M$-approximation technique.
\end{Theorem}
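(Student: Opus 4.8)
The plan is to express the expected number of coded transmissions spent by the partition-based technique as a sum of $M$ per-sub-block costs, bound each cost via the weak $\b$-approximation guarantee on that sub-block, and then bound the optimum $\overline U_{\min}(\mA_m)$ of each sub-block by the optimum $\overline U_{\min}(\mA)$ of the whole problem. Write $w_{n,m}=|\W_n\cap\P_m|$ for the number of packets $\r_n$ wants from $\P_m$; since the partition is disjoint, $\sum_{m=1}^{M}w_{n,m}=w_n$. As the $M$ sub-blocks are broadcast one after the other, the total number of coded transmissions is the sum of the per-sub-block numbers, so by linearity of expectation the partition-based technique spends, in expectation, $\sum_{m=1}^{M}\overline U_{\min}(\mathrm{X},\mA_m)$ transmissions, where $\overline U_{\min}(\mathrm{X},\mA_m)$ denotes the minimum expected number of transmissions that the inner technique $\mathrm{X}$ needs on the sub-SFM $\mA_m$. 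Applying the definition of weak throughput $\b$-approximation to the (fixed) SFM $\mA_m$ gives $\overline U_{\min}(\mathrm{X},\mA_m)\leqslant\b\,\overline U_{\min}(\mA_m)$, so it remains to prove $\overline U_{\min}(\mA_m)\leqslant\overline U_{\min}(\mA)$ for every $m$.

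For this I would argue through RLNC. By Theorem~1, RLNC is throughput-optimal on every SFM; moreover, a receiver $\r_n$ running RLNC finishes exactly at the slot of its $w_n$-th successful reception, since every successfully received coded packet is innovative as long as $\r_n$ still wants a packet. Because any LNC technique needs at least $w_n$ successful receptions at $\r_n$, this gives $\overline U_{\min}(\mA)=\mathbb E\!\left[\max_n T_n(w_n)\right]$, where $T_n(w)$ is the slot of the $w$-th success on $\r_n$'s erasure channel. The same reasoning applied to $\mA_m$, in which $\r_n$ wants only the $w_{n,m}\leqslant w_n$ packets of $\W_n\cap\P_m$, gives $\overline U_{\min}(\mA_m)=\mathbb E\!\left[\max_n T_n(w_{n,m})\right]$. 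Coupling the two processes on a common erasure realization, $T_n(w_{n,m})\leqslant T_n(w_n)$ for every $n$ and every realization, hence $\max_n T_n(w_{n,m})\leqslant\max_n T_n(w_n)$ pointwise; taking expectations yields $\overline U_{\min}(\mA_m)\leqslant\overline U_{\min}(\mA)$. Chaining the three bounds, the partition-based technique spends at most $\b M\,\overline U_{\min}(\mA)$ transmissions in expectation, so it is a weak throughput $\b M$-approximation technique and therefore, a fortiori, a weak throughput $2\b M$-approximation technique as claimed.

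The step I expect to be the crux is $\overline U_{\min}(\mA_m)\leqslant\overline U_{\min}(\mA)$: both quantities are expectations of a maximum over receivers, so they cannot be compared termwise, and one has to rule out that an adversarially chosen sub-SFM is harder than the original SFM. The coupling argument settles this, but it leans on the fact --- implicit in Theorem~1 --- that RLNC attains $\overline U_{\min,n}=\tfrac{w_n}{1-P_{e,n}}$ slot-by-slot on each channel rather than only in expectation; if one instead uses only expectation-level bounds together with coarser inequalities such as $\max_n(\cdot)\leqslant\sum_n(\cdot)$, the constant degrades, which is the slack that the stated $2\b M$ leaves room for.
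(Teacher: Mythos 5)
Your proof is correct, and in fact it establishes the sharper ratio $\beta M$ (the paper's own proof also only derives $\beta M$; the factor $2$ in the theorem statement is slack, consistent with the ``at most'' phrasing). The overall skeleton --- sum the $M$ per-sub-block expected costs and bound each one --- matches the paper, but the key intermediate inequality is genuinely different. The paper bounds each sub-block's cost by asserting that \emph{technique X itself} needs at most $\overline U_{\min}(\mathrm{X},\mA)$ transmissions on any sub-SFM of $\mA$; this is a monotonicity property of an arbitrary technique X, stated without justification and not obviously true in general. You instead apply the weak $\beta$-approximation guarantee directly to each sub-SFM $\mA_m$ (legitimate, since the definition quantifies over all SFMs) and then prove monotonicity of the \emph{optimum}, $\overline U_{\min}(\mA_m)\leqslant\overline U_{\min}(\mA)$, via the characterization $\overline U_{\min}=\mathbb E\left[\max_n T_n(w_n)\right]$ coming from RLNC achievability plus the converse that every receiver needs $w_n$ successful receptions, followed by a pointwise coupling of the erasure realizations. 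This buys rigor: monotonicity of the optimum is a genuine fact (removing wanted packets cannot lengthen the optimal broadcast), whereas monotonicity of an arbitrary technique's performance is not, and your identification of the comparison of the two expected maxima as the crux is exactly right. The only caveat is that your coupling leans on the idealized RLNC model in which every successfully received coded packet is innovative (e.g., a large field), but that is the same idealization the paper itself uses in Theorem~1, so nothing is lost relative to the paper's framework.
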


\begin{proof}
When a weak throughput $\beta$-approximation technique called X is applied to any given SFM $\mA$, by definition it holds that $\overline U_{\min}(X)\leqslant \beta \overline U_{\min}$. Since any sub-SFM of $\mA$ requires, on average, at most $\overline U_{\min}(X)$ coded transmissions, any $M$-partition of $\mA$ need at most $M\overline U_{\min}(X)$ coded transmissions. Thus, $\overline U_{\min}(\mathrm{partition,~X})\leqslant \beta M\overline U_{\min}$.
\end{proof}
However, weakly approximating throughput may not help these techniques weakly approximate APDD:
\begin{Theorem}\label{theo:partition_weak_throughput}
Partition-based throughput weak approximation techniques do not necessarily weakly approximate APDD.
\end{Theorem}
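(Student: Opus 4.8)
The plan is to exhibit one partition-based technique that provably weakly approximates throughput while its weak APDD approximation ratio can be forced above every constant by a suitable family of SFMs. Let ``X'' be the technique that splits $\P$ into its first and second halves $\P_1$ and $\P_2$ and runs RLNC on $\P_1$ and then on $\P_2$. Since RLNC is strong throughput-optimal, hence weak throughput $1$-approximation, the preceding theorem (a partition into $M$ sub-blocks using a weak throughput $\b$-approximation technique per sub-block is a weak throughput $2\b M$-approximation technique) gives, with $M=2$ and $\b=1$, that X is a weak throughput $4$-approximation technique; thus the hypothesis of the statement holds for X, and what remains is to show that no constant $\b$ makes X a weak APDD $\b$-approximation technique.

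For that I would use the following erasure-free SFM ($P_{e,n}=0$ for all $n$), parameterised by $m\geq1$ and $L\geq1$, with $N=L+1$ receivers, $K=2m$ packets, $\P_1=\{\p_1,\dots,\p_m\}$ and $\P_2=\{\p_{m+1},\dots,\p_{2m}\}$: receiver $\r_A$ wants exactly $\{\p_1,\dots,\p_m\}$ (equivalently, already holds all of $\P_2$), and each of $\r_1,\dots,\r_L$ wants only $\p_{m+1}$ (equivalently, already holds every packet except $\p_{m+1}$). Analysing X on this instance: its first phase is RLNC on $\P_1$, and since $\r_A$ needs all $m$ of these packets the phase lasts $m$ slots, so $\r_A$ decodes its whole wanted set at time $m$; moreover the $\r_j$ receive nothing useful in this phase because it transmits only combinations of $\P_1$, which they already hold entirely. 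The second phase, RLNC on $\P_2$, therefore starts at time $m+1$, where a single generic combination of $\P_2$ lets each $\r_j$ cancel its $m-1$ known $\P_2$-packets and recover $\p_{m+1}$. Hence $D_n(\mathrm{X})=m$ for $n=A$ and $D_n(\mathrm{X})=m+1$ for $n=1,\dots,L$, i.e. $\overline D_{\min}(\mathrm{X})=\frac{m^2+L(m+1)}{m+L}$, which tends to $m+1$ as $L\to\infty$.

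Next I would upper-bound the optimum by an explicit schedule: send $\p_1+\p_{m+1}$ in slot $1$ and $\p_2,\dots,\p_m$ uncoded in slots $2,\dots,m$. Receiver $\r_A$ (holding $\p_{m+1}$) then decodes $\p_1$ in slot $1$ and $\p_i$ in slot $i$, while every $\r_j$ (holding $\p_1$) decodes $\p_{m+1}$ in slot $1$; this achieves APDD $\frac{m(m+1)/2+L}{m+L}$, so, using also that every wanted packet needs at least one slot, $1\leq\overline D_{\min}\leq\frac{m(m+1)/2+L}{m+L}\to1$ as $L\to\infty$. Combining the two estimates, for fixed $m$ and $L\to\infty$ the ratio $\overline D_{\min}(\mathrm{X})/\overline D_{\min}$ approaches $m+1$, so choosing $m>\b$ and then $L$ large enough defeats any prescribed constant $\b$, which is exactly the claim.

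The main obstacle, and the point needing the most care, is keeping the construction compatible with both the paper's SFM model and the theorem's hypothesis. In this model every receiver wants all packets it lacks, so the ``light'' receivers $\r_1,\dots,\r_L$ must be given all packets except $\p_{m+1}$, and one must verify that this rich side information does not let them bypass the long first phase of X; it does not, precisely because that phase transmits only combinations of $\P_1$, which they already hold. A related subtlety is that the $L\to\infty$ limit is genuinely required: $\r_A$ contributes delay $\Theta(m)$ to every schedule, including the optimal one, so enough light receivers are needed to dilute that contribution in the average, driving $\overline D_{\min}$ down to a constant while $\overline D_{\min}(\mathrm{X})$ stays near $m+1$.
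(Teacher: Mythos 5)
Your proposal is correct and follows essentially the same route as the paper's proof: a single ``heavy'' receiver whose wanted set fills the first sub-block, plus a growing number of ``light'' receivers each wanting one packet from the second sub-block, so that the partition forces the light receivers to wait $\Theta(m)$ slots while the optimum achieves APDD tending to $1$. Your version merely makes explicit what the paper leaves as ``$\approx$'' (the exact APDD of X and a concrete near-optimal schedule witnessing $\overline D_{\min}\to 1$), which is a welcome tightening but not a different argument.
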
 

\begin{proof}
For any given SFM $\mA$, without loss of generality let us assume receiver $\r_1$ wants the largest subset $\W_1$ of data packets of $\P$. Consider a partition-based technique, called X, that partitions $\P$ into two sub-blocks: $\P_1=\W_1$ and $\P_2=\P\setminus\W_1$. Due to Theorem \ref{theo:partition_weak_throughput}, when RLNC is applied, technique-X can weakly approximate the throughput with a ratio of $2$.

We now show that technique-X cannot weakly approximate APDD. Consider an SFM with $N$ receivers. $\r_1$ wants $\W_1$, and all the remaining $N-1$ receivers only want one data packet not in $\W_1$. When $N\gg w_1$, we have $D_{\min}\approx 1$. But if technique-X is applied, the remaining $N-1$ receivers can only decode after $\r_1$ has fully decoded, indicating that $D_{\min}(\mathrm{X})\approx w_1+1$, which is not within a constant multiple of $D_{\min}$.\footnote{
For this particular SFM, there exist better partition strategies that are able to minimize APDD. This, however, is irrelevant to the theorem and its proof.}
\end{proof}

This theorem also indicates the general relation between weak throughput and APDD approximation:
\begin{Corollary}
Weak throughput approximation techniques are not necessarily weak APDD approximation techniques.
\end{Corollary}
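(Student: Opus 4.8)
The plan is to read this corollary directly off the construction already used in the proof of Theorem~\ref{theo:partition_weak_throughput}: that construction exhibits, within a single technique, both the presence of weak throughput approximation and the failure of weak APDD approximation, so there is essentially nothing new to establish. The only real work is checking that the quantifiers in the two definitions line up with that construction.

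First I would isolate the witness technique. Let X be the partition-based technique that, given any SFM, sets $\P_1=\W_1$ (the largest wanted set, belonging to receiver $\r_1$) and $\P_2=\P\setminus\W_1$, and runs RLNC on $\P_1$ and then on $\P_2$. To confirm that X is a \emph{weak throughput $\beta$-approximation} technique for some constant $\beta$, I would invoke the earlier theorem bounding the throughput of an $M$-partition by $2\beta M$, applied with $M=2$ and $\beta=1$ (RLNC being weak throughput-optimal): this yields $\overline U_{\min}(\mathrm{X})\leqslant 2\,\overline U_{\min}$ for \emph{every} SFM and \emph{every} packet erasure probability vector, so X is a weak throughput $2$-approximation technique with no exceptions.

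Next I would recall the family of SFMs from the proof of Theorem~\ref{theo:partition_weak_throughput}: $\r_1$ wants $\W_1$ with $|\W_1|=w_1$, while each of the remaining $N-1$ receivers wants a single data packet lying outside $\W_1$. Along this family, with $w_1$ fixed and $N\to\infty$, one has $\Dmin\to 1$ whereas $\Dmin(\mathrm{X})\to w_1+1$, since under X the $N-1$ singleton receivers cannot decode until all of $\P_1$ has been delivered; hence $\Dmin(\mathrm{X})/\Dmin$ is unbounded and no constant can make X a weak APDD approximation technique. Putting the two parts together, X is a weak throughput approximation technique that is not a weak APDD approximation technique, which is exactly the corollary.

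The only point that needs care — and the closest thing to an obstacle — is the asymmetry of the quantifiers between the two notions: being a weak throughput $\beta$-approximation requires the throughput ratio to be bounded \emph{uniformly over all SFMs}, so it matters that the $2\beta M$ bound is uniform, whereas to defeat the APDD property it suffices to exhibit \emph{one} parametrized family of SFMs on which the APDD ratio diverges. Once these are matched up the corollary is immediate; indeed it could equally well be phrased as a remark following Theorem~\ref{theo:partition_weak_throughput}.
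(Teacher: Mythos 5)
Your proposal is correct and matches the paper's own reasoning: the corollary is read directly off the proof of the preceding theorem, using the same witness technique (partition into $\P_1=\W_1$ and $\P_2=\P\setminus\W_1$ with RLNC), the same appeal to the partition throughput bound to certify weak throughput $2$-approximation, and the same divergent family of SFMs ($N\gg w_1$) to defeat weak APDD approximation. Your extra remark about the quantifier asymmetry (uniform bound needed for throughput, a single divergent family sufficing for APDD) is a correct and welcome clarification of what the paper leaves implicit.
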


However, we are not able to identify the strong throughput and APDD approximation performance of partition-based LNC techniques without specifying the partitioning strategy, which is out of the scope of this paper.
\section{Memoryless LNC Techniques}
An LNC technique is memoryless if its receivers discard undecodable coded packet(s) rather than storing them for future decodings. A well-known class of memoryless LNC techniques is IDNC, which allows a subset of receivers to instantly decode a wanted data packet from each coded packet, so that APDD could be reduced.
However, the cost is a degradation in the throughput of receivers who discard useful, but instantly undecodable coded packets. In this section, we prove the following two theorems:


\begin{Theorem}\label{theo:mem_weak_thpt}
Memoryless LNC techniques are not weak throughput approximation techniques.
\end{Theorem}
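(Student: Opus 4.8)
The plan is to show that for every constant $\beta\ge 1$ there is a single state feedback matrix on which \emph{every} memoryless LNC technique uses strictly more than $\beta\,\overline U_{\min}$ transmissions, so that no memoryless technique can be a weak throughput $\beta$-approximation technique. I would take an integer parameter $K$ and use the SFM with $K$ data packets $\p_1,\dots,\p_K$, no packet erasures ($P_{e,n}=0$ for all $n$), and one receiver $\r_{ij}$ for each unordered pair $\{i,j\}\subseteq\{1,\dots,K\}$, where $\r_{ij}$ wants exactly $\{\p_i,\p_j\}$ and therefore already holds all of $\P\setminus\{\p_i,\p_j\}$. Here every $w_n=2$, so, as established earlier, with no erasures $\overline U_{\min}=\Umin=\max_n w_n=2$ (concretely, a non-memoryless technique that sends two suitably chosen coded packets over a large enough field lets every $\r_{ij}$ cancel its $K-2$ known packets and solve the resulting $2\times 2$ system for $\p_i,\p_j$).

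The core of the argument is a combinatorial lower bound on any memoryless schedule for this SFM. Fix any memoryless technique and any realization of a run of it, and let $S_1,\dots,S_T$ be the supports (index sets of nonzero coding coefficients) of the coded packets it sends before completion. I claim $\{S_t\}_{t=1}^T$ must \emph{separate every pair}: for each $\{i,j\}$ there is a $t$ with $|S_t\cap\{i,j\}|=1$. To see this, look at the first coded packet from which $\r_{ij}$ recovers a new data packet. Up to that instant $\r_{ij}$ still holds exactly $\P\setminus\{\p_i,\p_j\}$, so after cancelling its known packets from a received combination with support $S_t$ it is left with $\a_i\p_i+\a_j\p_j$ where $\a_i\ne 0\iff i\in S_t$ and $\a_j\ne 0\iff j\in S_t$; since $\p_i,\p_j$ are linearly independent, this single equation reveals a new packet exactly when one of $\a_i,\a_j$ is nonzero and the other zero, i.e.\ when $|S_t\cap\{i,j\}|=1$, and, the technique being memoryless, a combination that is not immediately decodable is discarded and can never help later. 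Since $\r_{ij}$ is eventually served, such a $t$ exists. Hence the ``fingerprint'' map $k\mapsto(\mathbf 1[k\in S_1],\dots,\mathbf 1[k\in S_T])$ is injective on $\{1,\dots,K\}$, which forces $2^T\ge K$, i.e.\ $T\ge\lceil\log_2 K\rceil$. This holds in every realization, so $\overline U_{\min}(\mathrm X)\ge\lceil\log_2 K\rceil$ for every memoryless technique $\mathrm X$.

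Combining the two facts, on this SFM every memoryless technique has throughput ratio at least $\lceil\log_2 K\rceil/2$, which exceeds any prescribed $\beta$ as soon as $K>2^{2\beta}$; since this one SFM defeats all memoryless techniques at once, no constant $\beta$ can work and the theorem follows. (The same family incidentally rules out strong throughput approximation, since $\overline U_{\min,n}=2$ for each receiver $\r_n$ while some $\r_{ij}$ must receive at least $\lceil\log_2 K\rceil$ packets.)

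I expect the main obstacle to be making the ``instant decodability forces separation'' step airtight: one must argue that a memoryless receiver genuinely cannot extract anything from a combination it could not immediately decode --- which is exactly the memoryless assumption --- and that the bound is not circumvented by an adaptive or randomized choice of coding coefficients, which it is not, because the separation argument is applied realization by realization and uses only the fixed, prescribed side information of the receivers. A secondary point worth spelling out is why the construction uses receivers that want two packets rather than something simpler: in this model a receiver's side information is forced to be the exact complement of its demand, so receivers wanting a single packet all hold $K-1$ packets and are jointly satisfied by the single all-ones combination --- it is precisely the two-wanted-packet structure that creates the coding/instant-decodability tension driving the bound.
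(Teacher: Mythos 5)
Your proposal is correct and uses the very same hard instance as the paper: the SFM $\mA_1(K)$ with one receiver per unordered pair of packets, for which $\overline U_{\min}=2$ while any memoryless technique needs $\Omega(\log K)$ transmissions. Where you differ is in how the logarithmic lower bound is established. The paper tracks the evolution of the SFM transmission by transmission: sending an XOR of $m_1$ packets splits $\mA_1(K)$ into two independent sub-SFMs (one of type $\mA_1$, one of type $\mA_2$), and iterating this yields a layered graph whose minimum depth is $\lceil\log_2 K\rceil$, giving the exact value $\Umin(\mathrm{memoryless})=\lceil\log_2 K\rceil+1$. You instead observe that instant decodability for receiver $\r_{ij}$ forces some transmitted support $S_t$ to contain exactly one of $i,j$, so the supports form a separating family, the fingerprint map $k\mapsto(\mathbf 1[k\in S_1],\dots,\mathbf 1[k\in S_T])$ is injective, and $T\geqslant\lceil\log_2 K\rceil$. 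Your bound is weaker by an additive $1$ (you do not account for the final transmission needed to deliver each receiver's second packet), but that is immaterial to the theorem. What your route buys is rigor and robustness: the separating-family argument applies realization by realization to any adaptive or randomized memoryless scheme and over any field, whereas the paper's recursive splitting and its claim that halving is optimal are stated somewhat informally. What the paper's route buys is the exact optimal memoryless transmission count, which it then reuses to prove the companion result on APDD (Theorem 7); your lower bound would serve equally well there after noting that some receiver decodes its first packet only at transmission $\lceil\log_2 K\rceil$ or later, a point worth making explicit if you intend your argument to replace the appendix wholesale.
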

We prove this theorem in the appendix by showing that, for an SFM where every pair of two data packets is wanted by a different receiver, memoryless LNC techniques require at least $\lceil \log_2 K\rceil+1$ coded transmissions, which is not within a constant multiple of $U_{\min}=2$. (Here $\lceil x\rceil$ is the smallest integer greater than or equal to $x$.)

Then, since every strong throughput approximation technique is a weak one, the above theorem indicates that:

\begin{Corollary}
Memoryless LNC techniques are not strong throughput approximation techniques.
\end{Corollary}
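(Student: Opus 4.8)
The plan is to obtain this corollary directly from Theorem~\ref{theo:mem_weak_thpt} via the elementary fact, already noted when weak approximation was defined, that every strong throughput $\beta$-approximation technique is also a weak throughput $\beta$-approximation technique. The corollary then follows by contraposition, so almost no new work is needed.

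Concretely, I would argue as follows. Suppose, for contradiction, that some memoryless LNC technique X were a strong throughput $\beta$-approximation technique for some constant $\beta\geqslant 1$; that is, $\overline U_{\min,n}(\mathrm{X})\leqslant \beta\,\overline U_{\min,n}$ holds for every receiver $\r_n$ under every SFM and every packet erasure profile. Then, because the per-receiver completion bound controls the overall completion bound (the broadcast finishes once the slowest receiver is done), the same constant $\beta$ witnesses $\overline U_{\min}(\mathrm{X})\leqslant \beta\,\overline U_{\min}$ for every SFM and every erasure profile, i.e. X would be a weak throughput $\beta$-approximation technique. This contradicts Theorem~\ref{theo:mem_weak_thpt}, which states that no memoryless LNC technique achieves weak throughput approximation with any constant ratio. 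Hence no memoryless LNC technique can be a strong throughput approximation technique, which is the claim.

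The main obstacle, such as it is, is not in this corollary at all but in the appendix proof of Theorem~\ref{theo:mem_weak_thpt}: the construction of an SFM in which every pair of data packets is wanted by a distinct receiver, together with the argument that any memoryless technique then needs at least $\lceil\log_2 K\rceil+1$ transmissions while $\Umin=2$. The present statement merely packages that result through the "strong implies weak" implication. The one subtlety worth making explicit is that "not a (weak or strong) throughput approximation technique" has to be read uniformly over all constants $\beta$, so that the implication "strong $\beta$-approximation $\Rightarrow$ weak $\beta$-approximation" and its contrapositive line up without slippage in the constant.
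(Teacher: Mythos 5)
Your argument is exactly the one the paper uses: the corollary is deduced from Theorem~\ref{theo:mem_weak_thpt} via the observation that every strong throughput approximation technique is also a weak one, read in the contrapositive. Your added remarks on why strong implies weak and on quantifying uniformly over $\beta$ are correct elaborations of the same route, not a different proof.
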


The proof of Theorem \ref{theo:mem_weak_thpt} also sheds some light on the APDD approximation performance of memoryless LNC techniques.
\begin{Theorem}
Memoryless LNC techniques are not strong APDD approximation techniques.
\end{Theorem}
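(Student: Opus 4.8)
The plan is to reuse the instance from the proof of Theorem~\ref{theo:mem_weak_thpt}: the SFM with $K$ data packets $\P=\{\p_k\}_{k=1}^{K}$ and $\binom{K}{2}$ receivers, one per unordered pair, where the receiver indexed by $\{i,j\}$ wants exactly $\{\p_i,\p_j\}$ and nothing else. Every receiver here has $w_n=2$, so by the bound $D_{\min,n}\geqslant\frac{w_n+1}{2}$ — which is met with equality simply by transmitting that receiver's two wanted packets uncoded — we have $D_{\min,n}=\frac32$, and setting all $P_{e,n}=0$ also gives $\overline D_{\min,n}=\frac32$. Hence it suffices to show that under any memoryless technique $\mathrm X$ some receiver incurs an APDD that grows without bound in $K$.

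The next step is to isolate the combinatorial obstruction already underlying the throughput proof. Associate to each packet $\p_k$ its length-$T$ coefficient-support pattern over the first $T$ coded transmissions, whose $t$-th bit is $1$ iff $\a_k\neq0$ in transmission $t$. If a receiver wanting $\{\p_i,\p_j\}$ has so far decoded neither packet, then in a transmission in which $\p_i$ and $\p_j$ carry the same support bit it learns nothing: after cancelling the packets it already holds, the residual is either $0$ (both bits $0$) or a combination of two still-unknown packets (both bits $1$), which a memoryless receiver must discard. Consequently, if $\p_i$ and $\p_j$ share a pattern over the first $T$ transmissions, receiver $\{i,j\}$ decodes nothing in those $T$ transmissions. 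There are only $2^{T}$ distinct patterns, so as soon as $2^{T}<K$, i.e.\ $T=\lceil\log_2 K\rceil-1$, two packets must collide; the corresponding receiver $\r_n$ then has $u_{n,i},u_{n,j}\geqslant T+1$, and since a memoryless receiver decodes at most one new packet per transmission these indices are distinct, giving $D_n(\mathrm X)\geqslant\frac{(T+1)+(T+2)}{2}=\lceil\log_2 K\rceil+\tfrac12$. Therefore $\overline D_{\min,n}(\mathrm X)/\overline D_{\min,n}\geqslant\frac{2\lceil\log_2 K\rceil+1}{3}\to\infty$ as $K\to\infty$, so no constant $\beta$ can satisfy the defining inequality of a strong APDD $\beta$-approximation, which proves the theorem. (Equivalently, one can simply quote the $\overline U_{\min}(\mathrm X)\geqslant\lceil\log_2 K\rceil+1$ bound of Theorem~\ref{theo:mem_weak_thpt} and note that the last-finishing receiver has $D_n\geqslant\frac{1+\overline U_{\min}(\mathrm X)}{2}$.)

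I expect the main obstacle to be that the colliding pair — hence the slow receiver — depends on the coefficients $\mathrm X$ actually chooses, so for a randomized or feedback-adaptive memoryless technique its identity varies across realizations and the argument above controls $\mathbb{E}[\max_n D_n]$ rather than the quantity $\max_n\mathbb{E}[D_n]$ that the definition references. The clean route is to run the argument in the deterministic coding model; covering randomized memoryless techniques needs more care, since a technique that spreads the deficit over many pairs can keep every individual receiver's expected APDD bounded on this particular SFM, and one would then have to either sharpen the per-realization count of stuck receivers or augment this instance with a second family — for example a lone receiver wanting all $K$ packets, on which any memoryless technique that seldom decodes instantly already fails. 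In all cases the conclusion follows from the chain in the preceding paragraph.
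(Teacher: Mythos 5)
Your proof is correct and uses the same instance as the paper --- the SFM $\mA_1(K)$ in which every unordered pair of packets is wanted by a distinct receiver --- but it reaches the key lower bound by a genuinely different and more self-contained route. The paper simply cites its appendix argument for Theorem~\ref{theo:mem_weak_thpt} (an inductive splitting of $\mA_1(K)$ into a layered graph of sub-SFMs) to assert that the receiver finishing at transmission $\lceil\log_2K\rceil+1$ has $D_n\geqslant\lceil\log_2K\rceil/2+1$, and compares against $D_{\min,n}\leqslant D_{\min,n}(\mathrm{RLNC})=2$. You instead observe directly that a pair-receiver decodes in a given transmission iff exactly one of its two wanted packets has a nonzero coefficient, reduce the problem to the binary coefficient-support patterns of the $K$ packets, and apply pigeonhole: with $T=\lceil\log_2K\rceil-1$ transmissions there are only $2^T<K$ patterns, so some receiver decodes nothing for $T$ rounds and has $D_n\geqslant\lceil\log_2K\rceil+\tfrac12$, against the exact value $D_{\min,n}=\tfrac32$. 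This buys a crisper, fully explicit argument (the appendix's layered-graph evolution is rather informal), a sharper divergent ratio, and it transparently covers feedback-adaptive deterministic encoders. Your closing caveat about randomized memoryless techniques is well taken --- on this SFM a uniformly random XOR strategy in fact gives every pair-receiver expected APDD $3$, so the instance alone cannot rule out randomized encoders, and your proposed fix of a lone receiver wanting all $K$ packets does not work either (uncoded transmission is APDD-optimal for it) --- but note that the paper's own proof is silent on this point and implicitly assumes deterministic coding as well, so this is a limitation you share with, rather than a gap relative to, the published argument.
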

\begin{proof}
In the proof of Theorem \ref{theo:mem_weak_thpt}, receivers who decode their second wanted data packet after the last coded transmission have an APDD of at least $\left\lceil\log_2K\right\rceil/2+1$. However, $D_{\min,n}\leqslant D_{\min,n}(\mathrm{RLNC})=2$. Thus, $D_{\min,n}(\mathrm{memoryless})$ is not within a constant multiple of $D_{\min,n}$.
\end{proof}
We summarize our results on the approximation performance of the three classes of LNC techniques in Table \ref{tab:performance}.

\section{Conclusion}
In this paper, we generalized the problem of throughput and APDD optimization in linear network coded wireless broadcast to their strong and weak approximations. This generalization fills the gap between optimal and heuristic LNC techniques with approximation techniques, such as RLNC (strong throughput optimal and strong APDD 2-approximation) and partition-based LNC techniques (weak throughput approximation). By using these LNC techniques as references, we also revealed the interplay between throughput and APDD approximation, including a relation between strong throughput $\beta$-approximation and strong APDD $2\beta$-approximation, as well as the independence between weak throughput and APDD approximation. Besides, we negated the strong and weak throughput approximation and the strong APDD approximation of memoryless LNC techniques. Our results could inspire new approaches to design and evaluate LNC techniques.

As future work, we wish to tackle the interplay between strong APDD approximation and weak throughput approximation, the strong throughput and APDD approximation performance of partition-based LNC techniques, and the weak APDD approximation performance of memoryless LNC techniques. We are also interested in extending our research to index coding \cite{birk2006coding}, as well as and other applications of LNC, such as cooperative data exchange \cite{sprintson2010randomized}.


\newcommand{\tabincell}[2]{\begin{tabular}{@{}#1@{}}#2\end{tabular}}
\begin{table}
\caption{The approximation performance of three classes of LNC.}
\label{tab:performance}
\begin{tabular}{|c|c|c|c|c|}
\hline
~& \tabincell{c}{Strong \\ Throughput} & \tabincell{c}{Strong\\ APDD} & \tabincell{c}{Weak \\ Throughput} & \tabincell{c}{Weak \\ APDD}\\
\hline
RLNC&yes&yes&yes&yes\\
\hline
Partition-based & open & open & yes & may not be\\\hline 
Memoryless LNC&no&no&no&open\\
\hline
\end{tabular}
\end{table}

\begin{appendices}

\section{Proof of Theorem 6}
Our proof involves two types of SFMs:
\begin{itemize}
\item $\mA_1(K)$: every pair of data packets is wanted by a different receiver. There are $N=\frac{K(K-1)}{2}$ receivers;
\item $\mA_2(K,m)$: every data packet is wanted by $m$ different receivers. Every pair of data packets is wanted by a different receiver. There are $N=mK+\frac{K(K-1)}{2}$ receivers.
\end{itemize}

Note that $\Umin=2$ for $\mA_1(K)$. We prove the theorem by proving that $\Umin(\mathrm{memoryless})=\left\lceil\log_2K\right\rceil+1$ for $\mA_1(K)$.


The transmission starts by sending as $\c_1$ the XOR of any $m_1 \ge 1$ data packets in $\mA_1(K)$. The resulted SFM consists of two sub-SFMs: 1) an $A_1(K-m_1)$, which contains the $m_1$ data packets and the receivers who want 2 data packets from $\c_1$ and thus discard $\c_1$; 2) an $A_2(K-m_1, m-1)$, which contains the remaining $K-m_1$ data packets and the remaining receivers, which either has decoded one wanted data packet from $\c_1$ and still want one data packet from $K-m_1$, or want 2 data packets from $K-m_1$.
These two sub-SFMs are independent in the sense that a coded packet of $\mA_1(m_1)$ and a coded packet of $\mA_2(K-m_1,m_1)$ can be XOR-ed and sent without affecting their decodability for receivers.

Similarly, we can show that after sending the XOR of any arbitrary $m_3$ data packets from $A_2(K-m_1, m_1)$, the resulted SFM consists of two independent sub-SFMs: an $A_1(m_3)$ and an $A_2(K-m_1-m_3,m_1+m_3)$.

Continuing the logic, after the $u$-th transmission ($u\geqslant 1$), $\mA_1(K)$ is split into $2^{u-1}$ type-1 sub-SFMs and $2^{u-1}$ type-2 sub-SFMs. Only sub-SFMs that consists of a single data packet can be completed in one coded transmission and be removed. The evolution of $\mA_1(K)$ is demonstrated in a layered graph in figure \ref{fig:memoryless_decoding}. The $u$-th layer corresponds to the SFM before the $u$-th coded transmission. The total number of coded transmissions is thus the number of layers plus one. It is clear that the minimum number of layers is $\left\lceil\log_2K\right\rceil$, which is achieved by XOR-ing half of the data packets from each sub-SFM. Thus, $\Umin(\mathrm{memoryless})=\left\lceil\log_2K\right\rceil+1$.

%
%
%

\begin{figure}[t]
	\includegraphics[width=\linewidth]{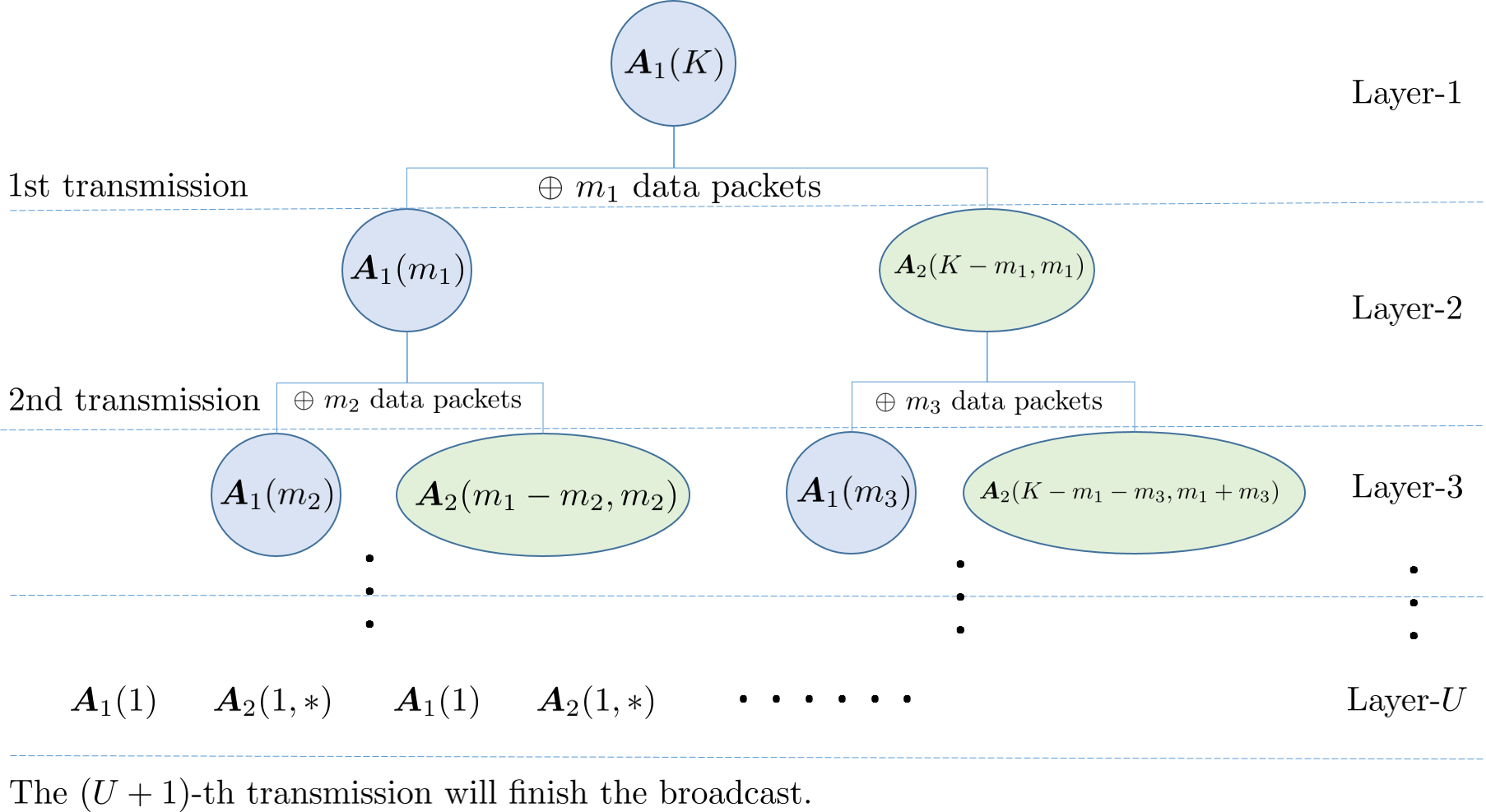}
	\caption{The decoding evolution of $\mA_1(K)$. (Note that '*'s are arbitrary positive integers whose values depend on the coded packets.)}
	\label{fig:memoryless_decoding}
\end{figure}
\end{appendices}

\bibliographystyle{IEEEtran}
\bibliography{IEEEabrv,My_ref}

\end{document}